\documentclass[journal]{IEEEtran}

\usepackage{amsmath,amssymb,amsfonts,amsthm,mathtools}
\usepackage{cite}
\usepackage{graphicx}
\usepackage{bm}
\newtheorem{theorem}{Theorem}
\newtheorem{proposition}{Proposition}
\newtheorem{corollary}{Corollary}
\newtheorem{definition}{Definition}
\newtheorem{remark}{Remark}

\newcommand{\R}{\mathbb{R}}
\newcommand{\E}{\mathbb{E}}
\newcommand{\Var}{\mathrm{Var}}
\newcommand{\Cov}{\mathrm{Cov}}
\newcommand{\Qfn}{Q}
\newcommand{\ml}{\mathrm{ML}}
\newcommand{\euc}{\mathrm{Euc}}
\newcommand{\argmin}{\mathop{\mathrm{argmin}}}

\title{Matched and Euclidean-Mismatched Decoding on Fourier-Curve Constellations with Tangent Noise}

\author{
Bin~Han,~\IEEEmembership{Senior Member,~IEEE,} Hao~Chen, Muxia~Sun,~\IEEEmembership{Member,~IEEE,}
\\H.~V.~Poor,~\IEEEmembership{Life Fellow,~IEEE,}
and Hans~D.~Schotten,~\IEEEmembership{Member,~IEEE}%
\thanks{B. Han and H. D. Schotten are with RPTU University Kaiserslautern-Landau, Germany. H. Chen is with ShanghaiTech University, China. M. Sun is with Tsinghua University, China.
H. V. Poor is with Princeton University, USA.
H. D. Schotten is also with the German Research Center for Artificial Intelligence (DFKI), Germany. M. Sun (muxiasun@tsinghua.edu.cn) and H. Chen (chenhao5@shanghaitech.edu.cn) are the corresponding authors. Anthropic's Claude 4.6 Opus and OpenAI's GPT-5.4 assisted with language polishing, auxiliary derivation checking, and minor simulation debugging.}%
}

\begin{document}
\maketitle
\begin{abstract}
We study matched and Euclidean-mismatched decoding on finite Fourier-curve constellations with tangent-space artificial noise. Each hypothesis induces a Gaussian law with symbol-dependent rank-one covariance. We derive exact Euclidean pairwise errors for arbitrary pairs and an exact Gaussian-expectation representation for matched decoding on bilaterally tangent-orthogonal pairs. For uniform even constellations, the Euclidean side yields explicit distance spectra and symbol-error bounds across all offset classes; the matched side is exact on antipodal pairs and benchmarked numerically at the full-codebook level via Monte Carlo. By isolating the detection-theoretic consequence of tangent-space artificial noise, these results clarify analytically how noise fraction and constellation density enter the mismatch behavior; secrecy-rate implications require additional channel and adversary modeling.
\end{abstract}

\begin{IEEEkeywords}
Artificial noise, Fourier curve, mismatched decoding, pairwise error probability, curved constellations.
\end{IEEEkeywords}

\section{Introduction}
On a flat or linear constellation, artificial noise (AN) added at the transmitter is statistically identical across hypotheses: when the AN covariance is common across codewords, maximum-likelihood decoding reduces to Euclidean nearest-neighbor decoding whether or not the receiver knows the AN structure. On a curved constellation, this symmetry breaks. When noise is injected along the tangent space of a curved manifold, each hypothesis induces a Gaussian law with a distinct, symbol-dependent rank-one covariance. A receiver that exploits these covariance labels uses a matched metric; a legacy additive white Gaussian noise (AWGN)-designed or complexity-constrained receiver that ignores them falls back to the Euclidean metric. The resulting matched-versus-mismatched gap is a new structural phenomenon created by the interplay of manifold curvature and directional noise.

This work develops an exact analysis of this geometry on the Fourier curve
\begin{equation}
\bm{x}(\theta)=\frac{1}{\sqrt{k}}
\begin{aligned}[t]
\bigl(
&\cos\theta,\sin\theta,\cos2\theta,\sin2\theta,\ldots, \\
&\cos k\theta,\sin k\theta
\bigr),\qquad \theta\in[0,2\pi),
\end{aligned}
\label{eq:fourier-curve}
\end{equation}
with values in $\R^{2k}$, as a canonical tractable instance: its constant speed, explicit harmonic structure, and closed-form antipodal geometry yield exact results in both matched and Euclidean regimes. The structural phenomena---symbol-dependent covariance, tangent--chord alignment, and the mismatch gap---are generic to smooth curved constellations; the Fourier curve makes them analytically accessible. Extensions to other curve families (e.g., parabolic or fractal foldings) would require numerical treatment of the same phenomena.

The tangent-noise mechanism is connected to AN-aided physical-layer security~\cite{goel2008,khisti2010,mukherjee2014}, directional modulation~\cite{daly2009}, constellation design on manifolds~\cite{hochwald2000,love2003}, geometric shaping~\cite{boecherer2015}, mismatched decoding~\cite{merhav1994,scarlett2020,lapidoth1996}, and Gaussian detection with hypothesis-dependent covariance~\cite{poor1994}. This work isolates the detection-theoretic consequence of tangent-space AN; secrecy-rate analysis requires additional modeling of the eavesdropper's channel and lies beyond the present scope.

Specifically, we derive: (i)~exact Euclidean pairwise errors for arbitrary pairs (Proposition~\ref{prop:euc-pep}); (ii)~an exact Gaussian-expectation representation for matched decoding on bilaterally tangent-orthogonal pairs (Theorem~\ref{thm:ml-pep}); (iii)~explicit antipodal geometry for every~$k$ (Theorem~\ref{thm:antipodal-geometry}); and (iv)~explicit Euclidean finite-codebook symbol-error rate (SER) bounds for uniform even constellations (Corollary~\ref{cor:euc-union}). Thus the Euclidean side admits explicit arbitrary-pair and codebook-level bounds, whereas the matched side is exact on phantom/antipodal pairs and benchmarked numerically at the full-codebook level. Because $\det(\bm{\Sigma}_i)$ is hypothesis-invariant, the matched metric differs from Euclidean nearest-neighbor decoding by only a rank-one quadratic correction: one tangent projection and one scalar quadratic term per candidate.

\section{Model and decoders}
Let $k\geqslant 1$, and $\bm{x}(\theta)$ as~\eqref{eq:fourier-curve}. Differentiation gives
\begin{equation}
\bm{x}'(\theta)=\frac{1}{\sqrt{k}}
\bigl(
-\sin\theta,\cos\theta,\ldots,-k\sin k\theta,k\cos k\theta
\bigr).
\label{eq:curve-derivative}
\end{equation}
Its squared speed is independent of $\theta$:
\begin{equation}
\|\bm{x}'(\theta)\|^2=\frac{1}{k}\sum_{m=1}^k m^2=\frac{(k+1)(2k+1)}{6}=:v_k^2.
\label{eq:speed}
\end{equation}
We write
% \begin{equation}
% \hat{\bm{t}}(\theta)=\frac{\bm{x}'(\theta)}{v_k}
% \label{eq:unit-tangent}
% \end{equation}
$\hat{\bm{t}}(\theta)=\bm{x}'(\theta)/v_k$
for the unit tangent.

Let $\Theta=\{\theta_1,\ldots,\theta_M\}\subset[0,2\pi)$ be a finite codebook and define
\[
\bm{x}_i=\bm{x}(\theta_i),
\qquad
\hat{\bm{t}}_i=\hat{\bm{t}}(\theta_i),
\qquad
\bar{\bm{x}}_i=\sqrt{1-\beta}\,\bm{x}_i,
\]
where $\beta\in[0,1)$ is the artificial-noise power fraction. If message $i$ is sent, then
\begin{equation}
\begin{aligned}
\bm{Y}&=\bar{\bm{x}}_i+\sqrt{\beta}\,\xi\,\hat{\bm{t}}_i+\bm{N},\\
\xi&\sim\mathcal{N}(0,1),
\qquad
\bm{N}\sim\mathcal{N}(\bm{0},\sigma_c^2\bm{I}_{2k}),
\end{aligned}
\label{eq:channel}
\end{equation}
with $\xi$ independent of $\bm{N}$. For each harmonic block, $\langle(\cos m\theta,\sin m\theta),(-\sin m\theta,\cos m\theta)\rangle=0$; summing over $m$ gives $\bm{x}_i\perp\bm{x}'_i$ and hence $\bm{x}_i\perp\hat{\bm{t}}_i$. Since $\|\bm{x}_i\|=\|\hat{\bm{t}}_i\|=1$, the average transmit power equals one.

Under hypothesis $i$, the observation law is Gaussian with mean $\bar{\bm{x}}_i$ and covariance
\[
\bm{\Sigma}_i=\beta\,\hat{\bm{t}}_i\hat{\bm{t}}_i^{\top}+\sigma_c^2\bm{I}_{2k}.
\]
Because $\det\bm{\Sigma}_i=\sigma_c^{2(2k-1)}(\beta+\sigma_c^2)$ is independent of $i$, maximum likelihood reduces to minimization of the induced quadratic form. By the Sherman--Morrison identity,
\[
\bm{\Sigma}_i^{-1}=\sigma_c^{-2}\bm{I}_{2k}-\frac{\beta}{\sigma_c^2(\beta+\sigma_c^2)}\hat{\bm{t}}_i\hat{\bm{t}}_i^{\top}.
\]
Therefore the matched receiver uses
\begin{equation}
\widehat{i}_{\ml}
=\argmin_{1\leqslant i\leqslant M}
\left\{
\frac{\|\bm{Y}-\bar{\bm{x}}_i\|^2}{\sigma_c^2}
-\frac{\beta\big((\bm{Y}-\bar{\bm{x}}_i)^{\top}\hat{\bm{t}}_i\big)^2}{\sigma_c^2(\beta+\sigma_c^2)}
\right\}.
\label{eq:ml-metric}
\end{equation}
Relative to Euclidean nearest-neighbor decoding, \eqref{eq:ml-metric} adds only one tangent projection and one scalar quadratic correction per candidate hypothesis. The comparison receiver ignores the covariance anisotropy and uses Euclidean nearest-neighbor decoding:
\begin{equation}
\widehat{i}_{\euc}=\argmin_{1\leqslant i\leqslant M}\|\bm{Y}-\bar{\bm{x}}_i\|^2.
\label{eq:euc-metric}
\end{equation}

Equivalently, \eqref{eq:euc-metric} is the maximum-likelihood rule under any spherically misspecified family $\{\mathcal{N}(\bar{\bm{x}}_i,\tau^2\bm{I}_{2k})\}_{i=1}^M$ with common $\tau^2>0$. Thus \eqref{eq:euc-metric} is the covariance-ignorant AWGN metric obtained by discarding the labels $\{\bm{\Sigma}_i\}$; it also models a legacy AWGN-designed detector, or a complexity-constrained receiver, that ignores symbol-dependent covariance~\cite{merhav1994}. We write
\begin{equation}
\Qfn(x)=\frac{1}{\sqrt{2\pi}}\int_x^{\infty}e^{-u^2/2}\,du.
\label{eq:Q-def}
\end{equation}

\begin{remark}[Implementation]\label{rem:impl}
No online nonlinear manifold transform is required. One may precompute $\{\bar{\bm{x}}_i,\hat{\bm{t}}_i\}$ offline and inject the symbol-dependent rank-one noise in digital baseband; the legitimate receiver then applies~\eqref{eq:ml-metric}, whereas a covariance-ignorant receiver applies~\eqref{eq:euc-metric}. Integration with an outer code is natural but beyond the present work.
\end{remark}

\section{Pairwise formulas}
For two distinct codewords $i\neq j$, define
\begin{equation}
\bm{d}_{ij}=\bm{x}_i-\bm{x}_j,
\qquad
\delta_{ij}=\|\bm{d}_{ij}\|,
\qquad
\cos\alpha_{ij}=\frac{|\bm{d}_{ij}^{\top}\hat{\bm{t}}_i|}{\delta_{ij}}.
\label{eq:dij}
\end{equation}
The angle $\alpha_{ij}$ measures the alignment between the chord and the tangent at the transmitted endpoint.
Throughout this section, $P(i\to j)$ denotes the binary comparison event that hypothesis $j$ scores no worse than hypothesis $i$; at the codebook level these events induce lower and upper bounds by inclusion and union.

\begin{proposition}[Euclidean-mismatched pairwise error]\label{prop:euc-pep}
Conditioned on message $i$ being sent, the Euclidean decoder~\eqref{eq:euc-metric} has pairwise error probability
\begin{equation}
P_{\euc}(i\to j)
=\Qfn\!\left(
\frac{\sqrt{1-\beta}\,\delta_{ij}}{2\sqrt{\beta\cos^2\alpha_{ij}+\sigma_c^2}}
\right).
\label{eq:euc-pep}
\end{equation}
\end{proposition}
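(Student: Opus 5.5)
The plan is to reduce the binary comparison to a single scalar Gaussian tail. Condition on message $i$ and write the effective noise as $\bm{Z}=\bm{Y}-\bar{\bm{x}}_i=\sqrt{\beta}\,\xi\,\hat{\bm{t}}_i+\bm{N}$. By \eqref{eq:channel}, $\bm{Z}\sim\mathcal{N}(\bm{0},\bm{\Sigma}_i)$. Set $\bar{\bm{d}}=\bar{\bm{x}}_i-\bar{\bm{x}}_j=\sqrt{1-\beta}\,\bm{d}_{ij}$. Then $\bm{Y}-\bar{\bm{x}}_j=\bm{Z}+\bar{\bm{d}}$, and the event that $j$ scores no worse than $i$ under \eqref{eq:euc-metric} is $\|\bm{Z}+\bar{\bm{d}}\|^2\leqslant\|\bm{Z}\|^2$. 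Expanding the square, the $\|\bm{Z}\|^2$ terms cancel, so the event is the half-space condition
\[
\bm{Z}^{\top}\bar{\bm{d}}\leqslant-\tfrac12\|\bar{\bm{d}}\|^2 .
\]

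Next I compute the law of the scalar $S=\bm{Z}^{\top}\bar{\bm{d}}$. It is a linear functional of a zero-mean Gaussian vector, so it is Gaussian with mean zero. Its variance is $\bar{\bm{d}}^{\top}\bm{\Sigma}_i\bar{\bm{d}}=\beta(\bar{\bm{d}}^{\top}\hat{\bm{t}}_i)^2+\sigma_c^2\|\bar{\bm{d}}\|^2$. By the definition \eqref{eq:dij} of $\alpha_{ij}$, $(\bm{d}_{ij}^{\top}\hat{\bm{t}}_i)^2=\delta_{ij}^2\cos^2\alpha_{ij}$. Hence
\[
\Var(S)=(1-\beta)\,\delta_{ij}^2\bigl(\beta\cos^2\alpha_{ij}+\sigma_c^2\bigr).
\]
This is the only place the symbol-dependent covariance enters: the Euclidean metric is blind to $\bm{\Sigma}_i$, but the true noise still has this law.

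Finally, by the symmetry of $S$,
\[
P(S\leqslant -\tfrac12\|\bar{\bm{d}}\|^2)=\Qfn\bigl(\tfrac12\|\bar{\bm{d}}\|^2/\sqrt{\Var(S)}\bigr).
\]
Substituting $\|\bar{\bm{d}}\|^2=(1-\beta)\delta_{ij}^2$ and cancelling one factor $\sqrt{1-\beta}\,\delta_{ij}$ yields \eqref{eq:euc-pep}.

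The only point needing care is nondegeneracy, and it is not a real obstacle. Because $\sigma_c^2>0$ and $\delta_{ij}>0$, the variance is strictly positive, so the tie event has probability zero and ``no worse'' versus ``strictly better'' does not matter. Here $\delta_{ij}>0$ holds for $\theta_i\neq\theta_j$ because the first harmonic block already makes $\bm{x}(\cdot)$ injective on $[0,2\pi)$.
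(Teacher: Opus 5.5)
Your proof is correct and follows the paper's argument essentially step for step. Both reduce the comparison to the half-space event $\sqrt{1-\beta}\,\bm{d}_{ij}^{\top}(\sqrt{\beta}\,\xi\,\hat{\bm{t}}_i+\bm{N})\leqslant -\tfrac12(1-\beta)\delta_{ij}^2$, compute the variance $(1-\beta)\delta_{ij}^2(\beta\cos^2\alpha_{ij}+\sigma_c^2)$ (you via $\bar{\bm{d}}^{\top}\bm{\Sigma}_i\bar{\bm{d}}$, the paper via independence of $\xi$ and $\bm{N}$, which is the same computation), and apply the Gaussian tail, with your nondegeneracy remark being a harmless extra that also relies on $\beta<1$.
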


\begin{IEEEproof}
Under message $i$, equation~\eqref{eq:channel} becomes
\[
\bm{Y}=\bar{\bm{x}}_i+\sqrt{\beta}\,\xi\,\hat{\bm{t}}_i+\bm{N}.
\]
The event that the Euclidean decoder prefers $j$ to $i$ is
\[
\begin{aligned}
\|\bm{Y}-\bar{\bm{x}}_j\|^2\leqslant \|\bm{Y}-\bar{\bm{x}}_i\|^2
\iff
\sqrt{1-\beta}\,\bm{d}_{ij}^{\top}(\sqrt{\beta}\,\xi\,\hat{\bm{t}}_i+\bm{N}) \\
\leqslant -\frac{(1-\beta)\delta_{ij}^2}{2}.
\end{aligned}
\]
Thus the pairwise error event is
\[
S_{ij}\leqslant -\frac{(1-\beta)\delta_{ij}^2}{2},
\qquad
S_{ij}:=\sqrt{1-\beta}\,\bm{d}_{ij}^{\top}(\sqrt{\beta}\,\xi\,\hat{\bm{t}}_i+\bm{N}).
\]
The variable $S_{ij}$ is centered Gaussian, and independence of $\xi$ and $\bm{N}$ gives
\begin{align*}
\Var(S_{ij})
&=(1-\beta)\Big(\beta(\bm{d}_{ij}^{\top}\hat{\bm{t}}_i)^2+\sigma_c^2\|\bm{d}_{ij}\|^2\Big) \\
&=(1-\beta)\delta_{ij}^2\bigl(\beta\cos^2\alpha_{ij}+\sigma_c^2\bigr).
\end{align*}
The Gaussian tail formula now yields~\eqref{eq:euc-pep}.
\end{IEEEproof}

At $\beta=0$, \eqref{eq:euc-pep} reduces to the classical AWGN pairwise expression $\Qfn(\delta_{ij}/(2\sigma_c))$.

\begin{definition}[Bilaterally tangent-orthogonal pair (phantom pair)]
A pair $(i,j)$ is called a \emph{bilaterally tangent-orthogonal pair}, or a \emph{phantom pair} (because the tangent noise becomes invisible to the Euclidean decoder at such pairs), if
\begin{equation}
\bm{d}_{ij}\perp \hat{\bm{t}}_i
\qquad\text{and}\qquad
\bm{d}_{ij}\perp \hat{\bm{t}}_j.
\label{eq:phantom-condition}
\end{equation}
These are precisely the pairs for which the chord--tangent cross terms vanish in the matched metric difference.
\end{definition}

For a phantom pair, Proposition~\ref{prop:euc-pep} collapses to
\begin{equation}
P_{\euc}(i\to j)=\Qfn\!\left(\frac{\sqrt{1-\beta}\,\delta_{ij}}{2\sigma_c}\right),
\label{eq:euc-phantom}
\end{equation}
so the Euclidean decoder feels artificial noise only through the deterministic signal shrinkage $\sqrt{1-\beta}$.

\begin{theorem}[Matched phantom-pair Gaussian-expectation representation]\label{thm:ml-pep}
Let $(i,j)$ be a phantom pair with Euclidean distance $\delta=\delta_{ij}$ and tangent correlation
\begin{equation}
\gamma=\langle \hat{\bm{t}}_i,\hat{\bm{t}}_j\rangle.
\label{eq:gamma-def}
\end{equation}
Then the matched receiver~\eqref{eq:ml-metric} satisfies the following.

\emph{(i) Degenerate case $|\gamma|=1$.} One has
\begin{equation}
P_{\ml}(i\to j)=\Qfn\!\left(\eta_e\right),
\qquad
\eta_e:=\frac{\sqrt{1-\beta}\,\delta}{2\sigma_c}.
\label{eq:ml-pep-degenerate}
\end{equation}
Hence the matched and Euclidean pairwise tests coincide.

\emph{(ii) Nondegenerate case $|\gamma|<1$.} One has
\begin{equation}
P_{\ml}(i\to j)
=\E\!\left[
\Qfn\!\left(\eta_e+aU^2-bV^2\right)
\right],
\label{eq:ml-pep}
\end{equation}
where $(U,V)$ is a centered jointly Gaussian vector with unit variances and correlation
\begin{equation}
\rho_{UV}=\frac{\gamma\sigma_c}{\sqrt{(1-\gamma^2)\beta+\sigma_c^2}},
\label{eq:rho-uv}
\end{equation}
and
\begin{align}
\eta_e&=\frac{\sqrt{1-\beta}\,\delta}{2\sigma_c},
\label{eq:eta-e}\\
 a&=\frac{\beta\big((1-\gamma^2)\beta+\sigma_c^2\big)}{2\sigma_c(\beta+\sigma_c^2)\delta\sqrt{1-\beta}},
\qquad
 b=\frac{\beta\sigma_c}{2(\beta+\sigma_c^2)\delta\sqrt{1-\beta}}.
\label{eq:ab}
\end{align}
In particular,
\begin{equation}
a-b=\frac{\beta^2(1-\gamma^2)}{2\sigma_c(\beta+\sigma_c^2)\delta\sqrt{1-\beta}}\geqslant 0.
\label{eq:a-minus-b}
\end{equation}
Equality holds when $\beta=0$; for $\beta>0$ and $|\gamma|<1$, one has $a-b>0$.
\end{theorem}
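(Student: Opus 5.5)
The plan is to write the pairwise matched error event as a linear Gaussian term plus an indefinite quadratic form in two further Gaussian variables. The key structural fact is that, on a phantom pair, the linear part is independent of the quadratic part.

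\emph{Metric difference.} Under hypothesis $i$, set $\bm{W}=\sqrt{\beta}\,\xi\,\hat{\bm{t}}_i+\bm{N}$ and $c=\sqrt{1-\beta}$, so that $\bm{Y}-\bar{\bm{x}}_i=\bm{W}$ and $\bm{Y}-\bar{\bm{x}}_j=c\,\bm{d}_{ij}+\bm{W}$. Put $\kappa=\beta/(\sigma_c^2(\beta+\sigma_c^2))$. Substituting into \eqref{eq:ml-metric} and using $\bm{d}_{ij}\perp\hat{\bm{t}}_j$ gives $(c\,\bm{d}_{ij}+\bm{W})^\top\hat{\bm{t}}_j=\bm{W}^\top\hat{\bm{t}}_j$, so the event ``$j$ scores no worse than $i$'' becomes
\[
\frac{c^2\delta^2+2c\,\bm{d}_{ij}^\top\bm{W}}{\sigma_c^2}+\kappa\bigl(X_1^2-X_2^2\bigr)\leqslant 0,
\qquad X_1=\bm{W}^\top\hat{\bm{t}}_i,\; X_2=\bm{W}^\top\hat{\bm{t}}_j .
\]
Using $\bm{d}_{ij}\perp\hat{\bm{t}}_i$, we get $\bm{d}_{ij}^\top\bm{W}=\bm{d}_{ij}^\top\bm{N}$. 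Hence $Z:=\bm{d}_{ij}^\top\bm{N}/(\sigma_c\delta)$ is standard normal.

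\emph{Independence.} The bilateral orthogonality \eqref{eq:phantom-condition} gives $\Cov(\bm{d}_{ij}^\top\bm{N},\bm{N}^\top\hat{\bm{t}}_i)=\sigma_c^2\,\bm{d}_{ij}^\top\hat{\bm{t}}_i=0$, and likewise for $\hat{\bm{t}}_j$. Also $\xi$ is independent of $\bm{N}$. So $Z$ is jointly Gaussian with, and uncorrelated with, $(X_1,X_2)$, hence independent of it. Dividing the event by $2c\delta/\sigma_c$ and conditioning on $(X_1,X_2)$ yields
\[
P_{\ml}(i\to j)=\E\bigl[\Qfn\bigl(\eta_e+s(X_1^2-X_2^2)\bigr)\bigr],
\qquad s=\frac{\kappa\sigma_c}{2c\delta}.
\]
Case (i) follows at once. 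If $|\gamma|=1$, then $\hat{\bm{t}}_j=\pm\hat{\bm{t}}_i$, so $X_2^2=X_1^2$ and the quadratic term vanishes, giving \eqref{eq:ml-pep-degenerate}. By \eqref{eq:euc-phantom} this equals the Euclidean pairwise probability.

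\emph{Quadratic representation (main obstacle).} A direct check shows $a/s=(1-\gamma^2)\beta+\sigma_c^2$ and $b/s=\sigma_c^2$. It therefore suffices to find standardized jointly Gaussian $U,V$ with correlation \eqref{eq:rho-uv} such that
\[
X_1^2-X_2^2=\bigl((1-\gamma^2)\beta+\sigma_c^2\bigr)U^2-\sigma_c^2V^2 .
\]
I expect this step to be the real work. Note that $X_1=\sqrt{\beta}\,\xi+n_1$ and $X_2=\gamma\sqrt{\beta}\,\xi+n_2$, where $(n_1,n_2)=(\bm{N}^\top\hat{\bm{t}}_i,\bm{N}^\top\hat{\bm{t}}_j)$ has covariance $\sigma_c^2\begin{psmallmatrix}1&\gamma\\ \gamma&1\end{psmallmatrix}$.

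I would not diagonalize simultaneously, since that does not fix the scales. Instead I would try the ansatz that $U$ and $V$ are linear combinations of $(X_1,X_2)$. I would then match the three coefficients of $X_1^2$, $X_1X_2$ and $X_2^2$, together with the variance constraints $\Var U=\Var V=1$, against the covariance
\[
\begin{pmatrix}\beta+\sigma_c^2 & \gamma(\beta+\sigma_c^2)\\ \gamma(\beta+\sigma_c^2) & \gamma^2\beta+\sigma_c^2\end{pmatrix}.
\]
The nondegeneracy $|\gamma|<1$ ensures this covariance is invertible, so the linear change of variables is legitimate. A determinant/trace comparison of the quadratic form relative to this covariance should pin down the eigen-scales $a/s$ and $-b/s$ and then force $\rho_{UV}$ as in \eqref{eq:rho-uv}. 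Because the representation is only needed in distribution, any valid $(U,V)$ suffices.

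\emph{Final identity.} The formula \eqref{eq:a-minus-b} is elementary algebra: $a-b=s\bigl((1-\gamma^2)\beta+\sigma_c^2-\sigma_c^2\bigr)$. Its sign is nonnegative, vanishes at $\beta=0$, and is strictly positive when $\beta>0$ and $|\gamma|<1$.
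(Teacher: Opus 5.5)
\textbf{Gap in part (ii).} Most of your argument is correct and matches the paper: the reduction to $P_{\ml}(i\to j)=\E\bigl[\Qfn\bigl(\eta_e+s(X_1^2-X_2^2)\bigr)\bigr]$ (your $X_1,X_2$ are the paper's $A,C$), the independence of $Z$, case (i), the ratios $a/s$ and $b/s$, and the identity for $a-b$. The gap is the step you yourself flag as the real work. You never exhibit $(U,V)$, and the rationale you give for why the ansatz will close is not right.

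Let $\Sigma$ be the covariance of $(X_1,X_2)$ and $J=\mathrm{diag}(1,-1)$. The numbers $a/s$ and $-b/s$ are \emph{not} the eigenvalues of $\Sigma J$, i.e.\ the eigen-scales of the form relative to the covariance. The traces do agree, both equal to $(1-\gamma^2)\beta$. However, $\det(\Sigma J)=-(1-\gamma^2)\sigma_c^2(\beta+\sigma_c^2)$, whereas $-(a/s)(b/s)=-\sigma_c^2\bigl((1-\gamma^2)\beta+\sigma_c^2\bigr)$. They differ by $\gamma^2\sigma_c^4$, which is exactly what the correlation absorbs: $\rho_{UV}^2(a/s)(b/s)=\gamma^2\sigma_c^4$. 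If they were the eigenvalues, $U$ and $V$ would come out independent.

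Nor can a trace/determinant comparison pin the scales down. The linear maps preserving $x_1^2-x_2^2$ form a one-parameter family of hyperbolic rotations (up to sign flips). Each member gives a valid representation $\lambda_1U^2-\lambda_2V^2$ with $\lambda_1-\lambda_2=(1-\gamma^2)\beta$, but $\lambda_1$ varies along the family; its minimum, attained at zero correlation, is the positive eigenvalue of $\Sigma J$. So the invariants only let you check that the prescribed triple $(a/s,b/s,\rho_{UV})$ is consistent. You would still owe an existence argument. For example, let $\Sigma'$ be the target covariance of $(\sqrt{a/s}\,U,\sqrt{b/s}\,V)$. The pairs $(\Sigma^{-1},J)$ and $((\Sigma')^{-1},J)$ are simultaneously congruent because $\Sigma J$ and $\Sigma' J$ have the same characteristic polynomial.

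\textbf{The missing idea.} What is needed is one explicit hyperbolic rotation. Put $r:=\sqrt{1-\gamma^2}$ and
\[
P:=\frac{X_1-\gamma X_2}{r},\qquad Q:=\frac{\gamma X_1-X_2}{r}.
\]
Expanding shows $P^2-Q^2=X_1^2-X_2^2$ identically. Substituting $X_1=\sqrt{\beta}\,\xi+n_1$ and $X_2=\gamma\sqrt{\beta}\,\xi+n_2$ gives $P=r\sqrt{\beta}\,\xi+(n_1-\gamma n_2)/r$ and $Q=(\gamma n_1-n_2)/r$. Hence $\Var(P)=(1-\gamma^2)\beta+\sigma_c^2=a/s$, $\Var(Q)=\sigma_c^2=b/s$, and $\Cov(P,Q)=\gamma\sigma_c^2$. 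Then $U:=P/\sqrt{\Var(P)}$ and $V:=Q/\sigma_c$ have exactly the correlation $\rho_{UV}$. Since $(P,Q)$ is a function of $(X_1,X_2)$, your independence argument carries over. This is precisely the rotation that removes $\xi$ from $Q$, which is why $b/s$ is the bare noise variance $\sigma_c^2$. The paper reaches the same $(P,Q)$ geometrically, through $\hat{\bm u}=(\hat{\bm t}_i-\gamma\hat{\bm t}_j)/r$ and $W_i=\gamma W_j+rW_u$.
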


\begin{IEEEproof}
Assume that message $i$ is transmitted, and set
\[
\begin{aligned}
\bm{R}_i&:=\bm{Y}-\bar{\bm{x}}_i=\sqrt{\beta}\,\xi\,\hat{\bm{t}}_i+\bm{N},\\
\bm{R}_j&:=\bm{Y}-\bar{\bm{x}}_j=\sqrt{1-\beta}\,\bm{d}_{ij}+\bm{R}_i.
\end{aligned}
\]
Let $\Lambda_i$ and $\Lambda_j$ denote the matched metrics in~\eqref{eq:ml-metric}. Substituting
\[
\bm{R}_j=\sqrt{1-\beta}\,\bm{d}_{ij}+\bm{R}_i
\]
into the score difference and using the phantom conditions~\eqref{eq:phantom-condition} to eliminate all chord--tangent cross terms gives
\begin{equation}
\begin{aligned}
\Lambda_j-\Lambda_i
&=\frac{(1-\beta)\delta^2+2\sqrt{1-\beta}\,\bm{d}_{ij}^{\top}\bm{N}}{\sigma_c^2}\\
&\quad -\frac{\beta}{\sigma_c^2(\beta+\sigma_c^2)}(C^2-A^2),
\end{aligned}
\label{eq:metric-difference}
\end{equation}
where
\begin{equation}
A:=\bm{R}_i^{\top}\hat{\bm{t}}_i,
\qquad
C:=\bm{R}_j^{\top}\hat{\bm{t}}_j.
\label{eq:A-C-def}
\end{equation}
If $|\gamma|=1$, then $\hat{\bm{t}}_j=\gamma\hat{\bm{t}}_i$. Since the pair is phantom, $\bm{d}_{ij}\perp \hat{\bm{t}}_i$ and therefore also $\bm{d}_{ij}\perp \hat{\bm{t}}_j$. Consequently
\[
C=\bm{R}_j^{\top}\hat{\bm{t}}_j=\gamma\bm{R}_i^{\top}\hat{\bm{t}}_i=\gamma A,
\]
so $A^2-C^2=0$. The score difference~\eqref{eq:metric-difference} reduces to
\[
\Lambda_j-\Lambda_i=\frac{(1-\beta)\delta^2+2\sqrt{1-\beta}\,\bm{d}_{ij}^{\top}\bm{N}}{\sigma_c^2},
\]
and therefore
\[
\{\Lambda_j\leqslant \Lambda_i\}=\{Z\leqslant -\eta_e\}.
\]
This proves~\eqref{eq:ml-pep-degenerate}. We now assume $|\gamma|<1$ and prove~\eqref{eq:ml-pep}.

Hence the error event $\{\Lambda_j\leqslant \Lambda_i\}$ is
\begin{equation}
Z+\frac{\beta(A^2-C^2)}{2\sqrt{1-\beta}\,\sigma_c\delta(\beta+\sigma_c^2)}\leqslant -\eta_e,
\label{eq:error-event-Z}
\end{equation}
where
\begin{equation}
Z:=\frac{\bm{d}_{ij}^{\top}\bm{N}}{\sigma_c\delta}\sim\mathcal{N}(0,1).
\label{eq:Z-def}
\end{equation}

We now factor the quadratic term. Set
\[
s:=\sqrt{1-\gamma^2},
\qquad
W_i:=\hat{\bm{t}}_i^{\top}\bm{N},
\qquad
W_j:=\hat{\bm{t}}_j^{\top}\bm{N}.
\]
Since $\bm{d}_{ij}\perp\hat{\bm{t}}_j$, one has
\[
A=\sqrt{\beta}\,\xi+W_i,
\qquad
C=\sqrt{\beta}\,\gamma\xi+W_j.
\]
Because $|\gamma|<1$, the vector
\[
\hat{\bm{u}}:=\frac{\hat{\bm{t}}_i-\gamma\hat{\bm{t}}_j}{s}
\]
is well defined, has unit norm, and is orthogonal to $\hat{\bm{t}}_j$. Let $W_u:=\hat{\bm{u}}^{\top}\bm{N}$. Since $\bm{N}$ is isotropic, $(W_j,W_u)$ is centered Gaussian with independent $\mathcal{N}(0,\sigma_c^2)$ coordinates, and
\begin{equation}
W_i=\gamma W_j+sW_u.
\label{eq:Wi-decomp}
\end{equation}
Introduce $P:=s\sqrt{\beta}\xi+W_u$, $Q:=\gamma W_u-sW_j$, so that $A=\gamma C+sP$ and $Q=(\gamma A-C)/s$. Equivalently, $P=(A-\gamma C)/s$ and $Q=(\gamma A-C)/s$. 
% \begin{equation}
% P:=s\sqrt{\beta}\,\xi+W_u,
% \qquad
% Q:=\gamma W_u-sW_j.
% \label{eq:PQ-def}
% \end{equation}
% Then
% \begin{equation}
% A=\gamma C+sP,
% \qquad
% Q=\frac{\gamma A-C}{s}.
% \label{eq:A-C-P-Q-rel}
% \end{equation}
% Equivalently,
% \[
% P=\frac{A-\gamma C}{s},
% \qquad
% Q=\frac{\gamma A-C}{s}.
% \]
Therefore
\begin{equation}
P^2-Q^2
=\frac{(A-\gamma C)^2-(\gamma A-C)^2}{1-\gamma^2}
=A^2-C^2.
\label{eq:A2C2}
\end{equation}
Moreover,
\begin{align}
\Var(P)&=(1-\gamma^2)\beta+\sigma_c^2,
\label{eq:varP}\\
\Var(Q)&=\sigma_c^2,
\label{eq:varQ}\\
\Cov(P,Q)&=\gamma\sigma_c^2.
\label{eq:covPQ}
\end{align}
Define the standardized variables
\begin{equation}
U:=\frac{P}{\sqrt{(1-\gamma^2)\beta+\sigma_c^2}},
\qquad
V:=\frac{Q}{\sigma_c}.
\label{eq:UV-def}
\end{equation}
Then $(U,V)$ is centered jointly Gaussian with unit variances and correlation~\eqref{eq:rho-uv}. When $\gamma=0$, the variables $U$ and $V$ are independent.

Finally, $Z$ is independent of $(U,V)$. Indeed,
\[
\hat{\bm{u}}\in\mathrm{span}\{\hat{\bm{t}}_i,\hat{\bm{t}}_j\}\subset \bm{d}_{ij}^{\perp},
\]
so the three unit vectors $\bm{d}_{ij}/\delta$, $\hat{\bm{t}}_j$, and $\hat{\bm{u}}$ are mutually orthogonal. The isotropic Gaussian $\bm{N}$ therefore has independent projections onto these directions, and $\xi$ is independent of $\bm{N}$. Thus $Z$ is independent of $(P,Q)$ and hence of $(U,V)$. Substituting~\eqref{eq:A2C2} and~\eqref{eq:UV-def} into~\eqref{eq:error-event-Z} gives
\[
Z\leqslant -\eta_e-aU^2+bV^2.
\]
Conditioning on $(U,V)$ and using $Z\sim\mathcal{N}(0,1)$ proves~\eqref{eq:ml-pep}. Equation~\eqref{eq:a-minus-b} is immediate from~\eqref{eq:ab}.
\end{IEEEproof}

At $\beta=0$, one has $\eta_e=\delta/(2\sigma_c)$ and $a=b=0$, so \eqref{eq:ml-pep} also collapses to the same AWGN pairwise expression. Equation~\eqref{eq:a-minus-b} is a structural measure of tangent mismatch in the nondegenerate regime $|\gamma|<1$. It quantifies the extra quadratic correction carried by the matched metric, but by itself does not determine the sign or monotonicity of the full expectation~\eqref{eq:ml-pep}. Numerically, one may write $V=\rho_{UV}U+\sqrt{1-\rho_{UV}^2}\,W$ with $U,W\stackrel{\text{i.i.d.}}{\sim}\mathcal{N}(0,1)$ and evaluate the remaining two-dimensional Gaussian integral by Gauss--Hermite quadrature.

\section{Finite-codebook geometry and error-rate analysis}\label{sec:finite}
We now specialize the exact pairwise analysis to finite Fourier-curve constellations.

\begin{figure}[!htbp]
\centering
\includegraphics[width=0.75\linewidth]{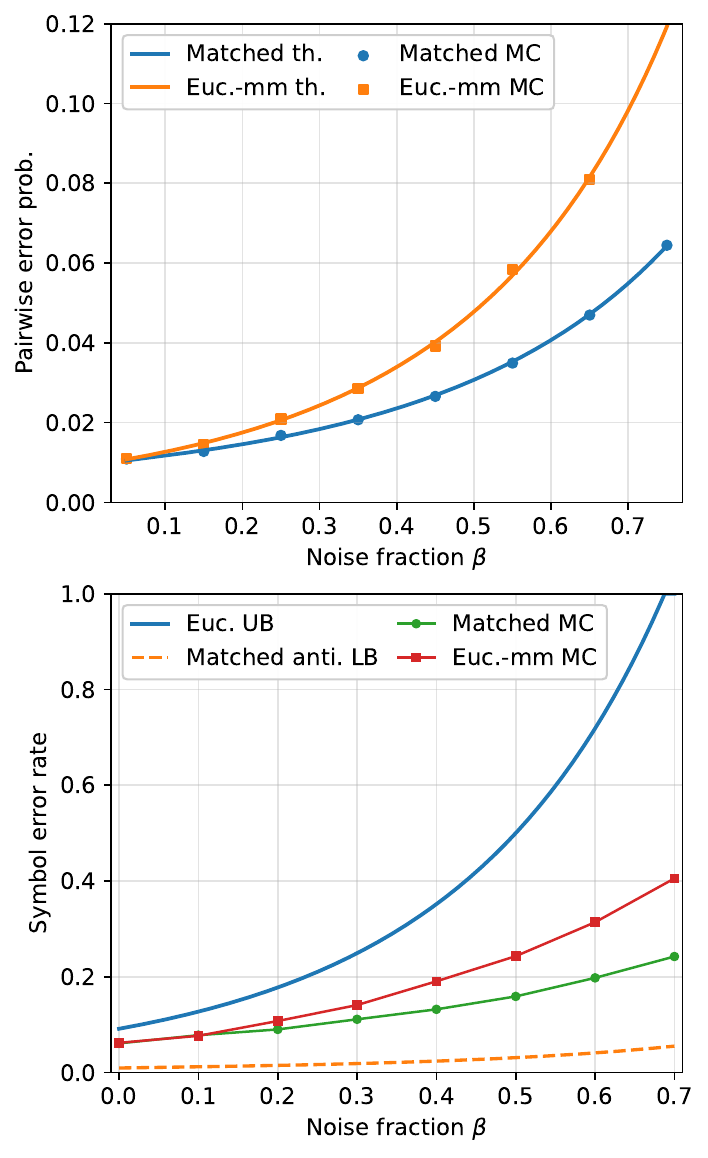}
\caption{Validation of the pairwise and finite-codebook analyses. Above: antipodal pairwise errors for $k=20$ and $\sigma_c=0.3$; lines from Corollary~\ref{cor:antipodal-gap}, markers from Monte Carlo ($5\times 10^4$ trials per $\beta$ value). Below: full-codebook SER for the representative setting $(k,M)=(20,12)$ at $\sigma_c=0.3$; matched and Euclidean-mismatched SER from Monte Carlo simulation ($2\times 10^4$ transmissions per $\beta$ value), the analytic matched antipodal lower bound $P_{\ml}^{\mathrm{anti}}(k)$, and the analytic Euclidean union bound~\eqref{eq:euc-union}.}
\label{fig:phantom-pep}
\end{figure}

\subsection{Antipodal geometry}
Because the speed is constant, the arclength distance between $\theta$ and $\theta+\pi$ along the curve is always $\pi v_k$. The ambient geometry is equally explicit.

\begin{theorem}[Exact antipodal geometry]\label{thm:antipodal-geometry}
For every $k\geqslant 1$ and every $\theta\in[0,2\pi)$, the antipodal difference vector
\begin{equation}
\bm{d}_k(\theta):=\bm{x}(\theta)-\bm{x}(\theta+\pi)
\label{eq:dk-def}
\end{equation}
satisfies
\begin{align}
\langle \bm{d}_k(\theta),\bm{x}'(\theta)\rangle&=0,
&
\langle \bm{d}_k(\theta),\bm{x}'(\theta+\pi)\rangle&=0,
\label{eq:bilateral-orthogonality}\\
\|\bm{d}_k(\theta)\|^2&=\frac{4\lceil k/2\rceil}{k},
\label{eq:delta-k}\\
\langle \hat{\bm{t}}(\theta),\hat{\bm{t}}(\theta+\pi)\rangle&=\frac{3(-1)^k}{2k+1}.
\label{eq:gamma-k}
\end{align}
Consequently the antipodal distortion ratio is
\begin{equation}
\rho_k:=\frac{\pi v_k}{\|\bm{d}_k(\theta)\|}
=\frac{\pi}{2}\sqrt{\frac{k(k+1)(2k+1)}{6\lceil k/2\rceil}}
=\frac{\pi k}{\sqrt{6}}+O(1).
\label{eq:rho-k}
\end{equation}
\end{theorem}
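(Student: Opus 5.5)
The plan is to work harmonic block by harmonic block, using the parity identities $\cos m(\theta+\pi)=(-1)^m\cos m\theta$ and $\sin m(\theta+\pi)=(-1)^m\sin m\theta$. In the $m$-th two-dimensional block, write $\bm{c}_m=(\cos m\theta,\sin m\theta)$ and $\bm{c}_m^\perp=(-\sin m\theta,\cos m\theta)$, so that $\|\bm{c}_m\|=\|\bm{c}_m^\perp\|=1$ and $\bm{c}_m\perp\bm{c}_m^\perp$. Then the $m$-th block of $\bm{d}_k(\theta)$ is $\frac{1}{\sqrt{k}}(1-(-1)^m)\bm{c}_m$. This equals $\frac{2}{\sqrt{k}}\bm{c}_m$ for odd $m$ and vanishes for even $m$. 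Likewise, the $m$-th block of $\bm{x}'(\theta)$ is $\frac{m}{\sqrt{k}}\bm{c}_m^\perp$, and that of $\bm{x}'(\theta+\pi)$ is $\frac{m(-1)^m}{\sqrt{k}}\bm{c}_m^\perp$. Every subsequent computation then reduces to block-diagonal sums of scalars.

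For \eqref{eq:bilateral-orthogonality}, each block of $\bm{d}_k$ is a multiple of $\bm{c}_m$, while each block of either tangent is a multiple of $\bm{c}_m^\perp$. Blockwise orthogonality therefore makes both inner products vanish term by term. For \eqref{eq:delta-k}, I would sum $\|\cdot\|^2$ over blocks, which gives $\frac{4}{k}\cdot\#\{m\le k: m \text{ odd}\}=\frac{4\lceil k/2\rceil}{k}$. For \eqref{eq:gamma-k}, the inner product of the two tangents is $\frac{1}{k}\sum_{m=1}^k(-1)^m m^2$. Dividing by $v_k^2$ from \eqref{eq:speed} gives the tangent correlation.

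The only genuine computation is the alternating sum $\sum_{m=1}^k(-1)^m m^2=(-1)^k\frac{k(k+1)}{2}$. I would prove it by induction, or by pairing consecutive terms $(2j)^2-(2j-1)^2=4j-1$ and handling the parity of $k$ separately. Dividing $(-1)^k\frac{k+1}{2}$ by $\frac{(k+1)(2k+1)}{6}$ then yields $\frac{3(-1)^k}{2k+1}$.

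Finally, \eqref{eq:rho-k} follows by substituting $v_k$ and $\|\bm{d}_k\|$: the ratio is $\pi\sqrt{\frac{(k+1)(2k+1)}{6}\cdot\frac{k}{4\lceil k/2\rceil}}$, which matches the stated closed form. For the asymptotic statement I would write $\lceil k/2\rceil=\frac{k}{2}(1+O(1/k))$. The radicand then becomes $\frac{k(k+1)(2k+1)}{3k}(1+O(1/k))=\frac{2k^2}{3}(1+O(1/k))$, so $\rho_k=\frac{\pi}{2}\sqrt{2/3}\,k\,(1+O(1/k))=\frac{\pi k}{\sqrt6}+O(1)$.

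I expect no real obstacle. The main care points are the sign bookkeeping in the alternating sum and the $O(1)$ expansion.
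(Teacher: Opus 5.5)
Your proposal is correct and follows essentially the same route as the paper: blockwise parity $\bm{x}_m(\theta+\pi)=(-1)^m\bm{x}_m(\theta)$, blockwise orthogonality of the chord and tangent blocks, counting odd harmonics for the chord length, and the alternating sum of squares by pairing consecutive terms to get $\gamma_k$. Your explicit expansion giving $\rho_k=\pi k/\sqrt{6}+O(1)$ fills in a step the paper leaves implicit.
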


\begin{IEEEproof}
For the $m$th harmonic block, $\bm{x}_m(\theta+\pi)=(-1)^m\bm{x}_m(\theta)$, so $\bm{d}_{k,m}(\theta)=\frac{1-(-1)^m}{\sqrt{k}}(\cos m\theta,\sin m\theta)$. Each block satisfies $\langle\bm{d}_{k,m},\bm{x}_m'\rangle=0$ by direct computation; summing over $m$ proves~\eqref{eq:bilateral-orthogonality}. Only odd harmonics contribute to $\|\bm{d}_k\|^2$, giving~\eqref{eq:delta-k}. The alternating sum $\sum_{m=1}^k(-1)^m m^2=(-1)^k k(k+1)/2$, obtained by pairing consecutive terms, yields~\eqref{eq:gamma-k} after normalization by $v_k^2$. Formula~\eqref{eq:rho-k} follows from the arclength $\pi v_k$ and the chord~\eqref{eq:delta-k}.
\end{IEEEproof}

\begin{corollary}[Antipodal pairwise formulas]\label{cor:antipodal-gap}
Fix $k\geqslant 1$, and let $j$ correspond to the antipodal point $\theta_j=\theta_i+\pi\pmod{2\pi}$. Then $(i,j)$ is a phantom pair. Its Euclidean distance and tangent correlation are
\begin{equation}
\delta_k=2\sqrt{\frac{\lceil k/2\rceil}{k}},
\qquad
\gamma_k=\frac{3(-1)^k}{2k+1}.
\label{eq:delta-gamma-k}
\end{equation}
Hence
\begin{equation}
P_{\euc}^{\mathrm{anti}}(k)
=\Qfn\!\left(
\frac{\sqrt{1-\beta}}{\sigma_c}\sqrt{\frac{\lceil k/2\rceil}{k}}
\right).
\label{eq:euc-antipodal-final}
\end{equation}
If $k=1$, then $\gamma_1=-1$ and
\begin{equation}
P_{\ml}^{\mathrm{anti}}(1)=P_{\euc}^{\mathrm{anti}}(1)=\Qfn\!\left(\frac{\sqrt{1-\beta}}{\sigma_c}\right).
\label{eq:ml-antipodal-k1}
\end{equation}
If $k\geqslant 2$, then
\begin{equation}
P_{\ml}^{\mathrm{anti}}(k)
=\E\!\left[\Qfn\!\left(\eta_{e,k}+a_kU^2-b_kV^2\right)\right],
\label{eq:ml-antipodal-final}
\end{equation}
where $\eta_{e,k},a_k,b_k$ are obtained from~\eqref{eq:eta-e}--\eqref{eq:ab} by substituting $(\delta,\gamma)=(\delta_k,\gamma_k)$. Moreover, for every $k\geqslant 2$,
\begin{equation}
\begin{aligned}
a_k-b_k
&=\frac{\beta^2(1-\gamma_k^2)}{2\sigma_c(\beta+\sigma_c^2)\delta_k\sqrt{1-\beta}}\geqslant 0,\\
|\gamma_k|&=\frac{3}{2k+1}\downarrow 0.
\end{aligned}
\label{eq:ak-bk}
\end{equation}
Equality in the first relation holds when $\beta=0$; for $\beta>0$ and $k\geqslant 2$, one has $a_k-b_k>0$. Thus the explicit mismatch parameter approaches the orthogonal-tangent regime $\gamma=0$ as $k\to\infty$.
\end{corollary}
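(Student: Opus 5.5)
The corollary follows by specializing Theorem~\ref{thm:antipodal-geometry} and then Theorem~\ref{thm:ml-pep} to the antipodal pair; no new probabilistic argument is needed.

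\emph{Phantom property and geometric parameters.} Set $\theta=\theta_i$. By \eqref{eq:bilateral-orthogonality}, $\bm{d}_{ij}=\bm{d}_k(\theta_i)$ is orthogonal to $\bm{x}'(\theta_i)$ and to $\bm{x}'(\theta_i+\pi)$. Dividing by the constant speed $v_k>0$ from \eqref{eq:speed} gives $\bm{d}_{ij}\perp\hat{\bm{t}}_i$ and $\bm{d}_{ij}\perp\hat{\bm{t}}_j$, which is exactly \eqref{eq:phantom-condition}. Taking the square root of \eqref{eq:delta-k} gives $\delta_k=2\sqrt{\lceil k/2\rceil/k}$, and \eqref{eq:gamma-k} is $\gamma_k$ verbatim. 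This establishes \eqref{eq:delta-gamma-k}.

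\emph{Euclidean formula.} The pair is phantom, so \eqref{eq:euc-phantom} applies. Substituting $\delta_k$ into $\sqrt{1-\beta}\,\delta_k/(2\sigma_c)$ cancels the factor $2$ and yields \eqref{eq:euc-antipodal-final}.

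\emph{Matched formula, case $k=1$.} Here $\gamma_1=-3/3=-1$, so $|\gamma|=1$ and part~(i) of Theorem~\ref{thm:ml-pep} applies. Since $\lceil 1/2\rceil/1=1$, we get $\delta_1=2$ and $\eta_e=\sqrt{1-\beta}/\sigma_c$. This matches the Euclidean value and gives \eqref{eq:ml-antipodal-k1}.

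\emph{Matched formula, case $k\geqslant 2$.} Here $|\gamma_k|=3/(2k+1)\leqslant 3/5<1$, so part~(ii) of Theorem~\ref{thm:ml-pep} gives \eqref{eq:ml-antipodal-final} directly with $(\delta,\gamma)=(\delta_k,\gamma_k)$. The expression for $a_k-b_k$ is \eqref{eq:a-minus-b} under the same substitution. It is nonnegative, since each factor is nonnegative and $\delta_k>0$. It vanishes at $\beta=0$ because of the factor $\beta^2$. It is strictly positive for $\beta>0$, because then $1-\gamma_k^2>0$. Finally, $|\gamma_k|=3/(2k+1)$ is strictly decreasing in $k$ and tends to $0$. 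This gives \eqref{eq:ak-bk} and the closing remark about approaching the orthogonal-tangent regime.

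\emph{Main obstacle.} There is no substantive obstacle; the only care needed is bookkeeping. The antipode must be taken modulo $2\pi$ so that $\theta_j$ lies in the codebook domain; periodicity of $\bm{x}$ makes this harmless. The case split must be exactly $k=1$ versus $k\geqslant 2$, since $|\gamma_k|=1$ holds only when $2k+1=3$.
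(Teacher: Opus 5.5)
Your proposal is correct and follows the paper's proof essentially step for step. The phantom property and $(\delta_k,\gamma_k)$ come from Theorem~\ref{thm:antipodal-geometry}, the Euclidean formula from \eqref{eq:euc-phantom}, and the $k=1$ versus $k\geqslant 2$ split maps onto the degenerate and nondegenerate cases of Theorem~\ref{thm:ml-pep}, with \eqref{eq:ak-bk} as the specialization of \eqref{eq:a-minus-b}. Your extra steps (dividing by $v_k>0$ to pass from $\bm{x}'$ to $\hat{\bm{t}}$, and checking $\eta_e=\sqrt{1-\beta}/\sigma_c$ at $k=1$) only make explicit what the paper leaves implicit.
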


\begin{IEEEproof}
The phantom property follows from~\eqref{eq:bilateral-orthogonality}. The formulas for $\delta_k$ and $\gamma_k$ come from~\eqref{eq:delta-k} and~\eqref{eq:gamma-k}. Substituting $\delta_k$ into~\eqref{eq:euc-phantom} gives~\eqref{eq:euc-antipodal-final}. If $k=1$, then $\gamma_1=-1$, so~\eqref{eq:ml-antipodal-k1} follows from the degenerate case~\eqref{eq:ml-pep-degenerate}. If $k\geqslant 2$, then $|\gamma_k|<1$, so substituting $(\delta,\gamma)=(\delta_k,\gamma_k)$ into~\eqref{eq:ml-pep} gives~\eqref{eq:ml-antipodal-final}. Equation~\eqref{eq:ak-bk} is the specialization of~\eqref{eq:a-minus-b}.
\end{IEEEproof}

As $k\to\infty$, antipodal pairs approach a universal orthogonal-tangent limit: $\delta_k\to\sqrt{2}$ and $\gamma_k\to 0$. Thus increasing $k$ does not enlarge the antipodal chord; it drives the pairwise geometry toward the orthogonal-tangent mismatch regime.

\subsection{Uniform even constellations and explicit SER analysis}
Let
\begin{equation}
\Theta_M:=\left\{\theta_m=\frac{2\pi m}{M}:m=0,1,\ldots,M-1\right\},
\qquad M\ \text{even}.
\label{eq:uniform-codebook}
\end{equation}
Each symbol then has a unique antipodal partner with index offset $M/2$.

For an offset $q\in\{1,\ldots,M-1\}$ define
\begin{equation}
\Delta_q:=\frac{2\pi q}{M},
\qquad
\delta_{k,M}(q):=\|\bm{x}(\theta)-\bm{x}(\theta+\Delta_q)\|,
\label{eq:Delta-q}
\end{equation}
where the right-hand side is independent of $\theta$ by rotational symmetry.

\begin{proposition}[Uniform offset geometry]\label{prop:uniform-offset}
For the uniform codebook~\eqref{eq:uniform-codebook} and any $q\in\{1,\ldots,M-1\}$,
\begin{align}
\delta_{k,M}(q)^2
&=\frac{2}{k}\sum_{m=1}^k\bigl(1-\cos(m\Delta_q)\bigr)
\label{eq:delta-spectrum-a}\\
&=2-\frac{2}{k}
\frac{\sin(k\Delta_q/2)\cos((k+1)\Delta_q/2)}{\sin(\Delta_q/2)},
\label{eq:delta-spectrum-b}
\end{align}
and
\begin{equation}
\cos\alpha_{k,M}(q)
=\frac{1}{k v_k\,\delta_{k,M}(q)}
\left|\sum_{m=1}^k m\sin(m\Delta_q)\right|
\label{eq:alpha-spectrum-a}
\end{equation}
with the closed form
\begin{equation}
\cos\alpha_{k,M}(q)
=\frac{\left|(k+1)\sin(k\Delta_q)-k\sin((k+1)\Delta_q)\right|}{4k v_k\,\delta_{k,M}(q)\sin^2(\Delta_q/2)}.
\label{eq:alpha-spectrum-b}
\end{equation}
Consequently,
\begin{equation}
P_{\euc}^{(q)}(k,M)
=\Qfn\!\left(
\frac{\sqrt{1-\beta}\,\delta_{k,M}(q)}{2\sqrt{\beta\cos^2\alpha_{k,M}(q)+\sigma_c^2}}
\right)
\label{eq:euc-q}
\end{equation}
is explicit for every offset $q$.
\end{proposition}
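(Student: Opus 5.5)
The plan is a direct computation, harmonic block by harmonic block, followed by two trigonometric sums and then Proposition~\ref{prop:euc-pep}. Write $\bm{x}(\theta)$ as a concatenation of planar blocks $\bm{x}_m(\theta)=k^{-1/2}(\cos m\theta,\sin m\theta)$. Each block of $\bm{x}(\theta)-\bm{x}(\theta+\Delta_q)$ has squared norm $k^{-1}\|e^{im\theta}-e^{im(\theta+\Delta_q)}\|^2=k^{-1}\bigl(2-2\cos(m\Delta_q)\bigr)$. Summing over $m$ gives \eqref{eq:delta-spectrum-a}, and shows that the distance does not depend on $\theta$. For \eqref{eq:delta-spectrum-b} I will use the Dirichlet-type identity
\[
\sum_{m=1}^k\cos(m\Delta)=\frac{\sin(k\Delta/2)\cos((k+1)\Delta/2)}{\sin(\Delta/2)},
\]
which follows from the real part of a geometric sum. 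This is valid since $\Delta_q\in(0,2\pi)$, so $\sin(\Delta_q/2)\neq0$.

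For the tangent alignment I need $\bm{d}_{ij}^{\top}\hat{\bm{t}}_i$ with $\theta_i=\theta$ and $\theta_j=\theta+\Delta_q$. Blockwise,
\[
\langle(\cos m\theta-\cos m(\theta+\Delta),\,\sin m\theta-\sin m(\theta+\Delta)),\,m(-\sin m\theta,\cos m\theta)\rangle .
\]
The self term vanishes by orthogonality, as already noted in Section~II. The cross term equals $-m\bigl(\sin m(\theta+\Delta)\cos m\theta-\cos m(\theta+\Delta)\sin m\theta\bigr)=-m\sin(m\Delta)$. With the prefactor $1/(k v_k)$ from $\bm{x}'/v_k$ and the two $1/\sqrt{k}$ factors, this gives $\bm{d}_{ij}^{\top}\hat{\bm{t}}_i=-(k v_k)^{-1}\sum_m m\sin(m\Delta_q)$. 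Dividing by $\delta_{k,M}(q)$ and taking absolute values yields \eqref{eq:alpha-spectrum-a}, again with no dependence on $\theta$. The offset direction convention (whether $j=i+q$ or $i-q$) only flips a sign, and the absolute value removes it.

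For the closed form \eqref{eq:alpha-spectrum-b} I will differentiate the cosine sum in $\Delta$: $\sum_m m\sin(m\Delta)=-\frac{d}{d\Delta}\sum_m\cos(m\Delta)$. Equivalently, I will take the imaginary part of $\sum_m m z^m$ with $z=e^{i\Delta}$, using
\[
\sum_{m=1}^k m z^m=\frac{z-(k+1)z^{k+1}+kz^{k+2}}{(1-z)^2}.
\]
I expect the main obstacle to be this simplification, together with tracking the overall sign. The plan is to write $(1-z)^2=-4\sin^2(\Delta/2)\,e^{i\Delta}$. Then multiplying the numerator by $e^{-i\Delta}$ gives $1-(k+1)e^{ik\Delta}+k e^{i(k+1)\Delta}$, whose imaginary part is $k\sin((k+1)\Delta)-(k+1)\sin(k\Delta)$. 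Dividing by $-4\sin^2(\Delta/2)$ produces $\bigl((k+1)\sin k\Delta-k\sin(k+1)\Delta\bigr)/(4\sin^2(\Delta/2))$. Under the absolute value, together with the factor $1/(k v_k\delta_{k,M}(q))$, this is exactly \eqref{eq:alpha-spectrum-b}.

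Finally, since $\delta_{k,M}(q)$ and $\cos\alpha_{k,M}(q)$ are independent of the transmitted index, substituting them into Proposition~\ref{prop:euc-pep} gives \eqref{eq:euc-q} for every pair at offset $q$.
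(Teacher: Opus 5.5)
Your proposal is correct and follows essentially the same route as the paper. Both compute the blockwise chord norm and the tangent projection $-\frac{1}{k v_k}\sum_m m\sin(m\Delta_q)$, close them with the finite cosine and sine sums, and substitute into Proposition~\ref{prop:euc-pep}. The only addition on your side is that you derive the two trigonometric identities explicitly (via $\sum_m m z^m$ and $(1-z)^2=-4\sin^2(\Delta/2)e^{i\Delta}$), whereas the paper simply invokes them.
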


\begin{IEEEproof}
For the $m$th harmonic block,
\[
\|\bm{x}_m(\theta)-\bm{x}_m(\theta+\Delta_q)\|^2
=\frac{2}{k}\bigl(1-\cos(m\Delta_q)\bigr).
\]
Summing over $m$ gives~\eqref{eq:delta-spectrum-a}, and the finite cosine-sum identity yields~\eqref{eq:delta-spectrum-b}. Likewise,
\[
\bigl(\bm{x}(\theta)-\bm{x}(\theta+\Delta_q)\bigr)^{\top}\bm{x}'(\theta)
=-\frac{1}{k}\sum_{m=1}^k m\sin(m\Delta_q),
\]
which implies~\eqref{eq:alpha-spectrum-a} after division by $v_k\delta_{k,M}(q)$. The finite sine-sum identity
\[
\sum_{m=1}^k m\sin(m\Delta_q)
=\frac{(k+1)\sin(k\Delta_q)-k\sin((k+1)\Delta_q)}{4\sin^2(\Delta_q/2)}
\]
gives~\eqref{eq:alpha-spectrum-b}. Substituting~\eqref{eq:delta-spectrum-b} and~\eqref{eq:alpha-spectrum-b} into Proposition~\ref{prop:euc-pep} yields~\eqref{eq:euc-q}.
\end{IEEEproof}

\begin{corollary}[Explicit Euclidean SER bounds]\label{cor:euc-union}
Consider the uniform even codebook~\eqref{eq:uniform-codebook} with equiprobable symbols. Let $P_{s,\euc}(k,M)$ be the symbol-error rate of the Euclidean decoder. Then
\begin{equation}
\begin{aligned}
\max_{1\leqslant q\leqslant M/2} P_{\euc}^{(q)}(k,M)
&\leqslant P_{s,\euc}(k,M)\\
&\leqslant 2\sum_{q=1}^{M/2-1} P_{\euc}^{(q)}(k,M)+P_{\euc}^{\mathrm{anti}}(k).
\end{aligned}
\label{eq:euc-union}
\end{equation}
where $P_{\euc}^{(q)}(k,M)$ is given by~\eqref{eq:euc-q}. Moreover,
$P_{\euc}^{\mathrm{anti}}(k)=P_{\euc}^{(M/2)}(k,M)$ is the explicit antipodal term~\eqref{eq:euc-antipodal-final}.
\end{corollary}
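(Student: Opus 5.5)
The plan is to condition on the transmitted symbol, use rotational symmetry of the uniform codebook to reduce everything to offset classes, and then apply inclusion for the lower bound and the union bound for the upper bound.

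\emph{Symmetry.} The rotation $\theta\mapsto\theta+2\pi/M$ acts on $\R^{2k}$ by the block-diagonal orthogonal map $\bm{R}=\mathrm{diag}(R_1,\ldots,R_k)$, where $R_m$ is rotation of the $m$th harmonic block by $2\pi m/M$. This map sends $\bm{x}_i$ to $\bm{x}_{i+1}$ and $\hat{\bm{t}}_i$ to $\hat{\bm{t}}_{i+1}$, since the derivative commutes with the rotation. It also preserves the isotropic noise $\bm{N}$ in law. So the joint law of the Euclidean metrics $\{\|\bm{Y}-\bar{\bm{x}}_j\|^2\}_j$ given symbol $i$ equals that given symbol $0$, after relabeling indices by offset. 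Hence the conditional symbol-error probability is the same for every $i$, and with equiprobable symbols $P_{s,\euc}(k,M)$ equals the conditional error given $i=0$.

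\emph{Bounds given $i=0$.} Let $E_j$ be the event that $j$ scores no worse than $0$. I take ties as errors, which is consistent with the paper's convention for $P(i\to j)$; ties have probability zero anyway. The error event is $\bigcup_{j\neq0}E_j$.
\begin{itemize}
\item For the lower bound, any single $E_j$ is contained in the error event, so $P_{s,\euc}\geqslant P(E_j)$ for each $j$. Maximizing over $j$ gives the left inequality once $j$ is restricted to $q=1,\ldots,M/2$.
\item For the upper bound, the union bound gives $P_{s,\euc}\leqslant\sum_{q=1}^{M-1}P(E_q)$.
\end{itemize}
By Proposition~\ref{prop:euc-pep}, $P(E_q)=P_{\euc}^{(q)}(k,M)$, using the offset geometry $\delta_{k,M}(q)$ and $\cos\alpha_{k,M}(q)$ from Proposition~\ref{prop:uniform-offset}. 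These quantities depend only on $q$ by the same rotational invariance.

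\emph{Pairing offsets.} The remaining step, and the one needing care, is to show that $P_{\euc}^{(q)}=P_{\euc}^{(M-q)}$, so that the sum becomes $2\sum_{q=1}^{M/2-1}P_{\euc}^{(q)}+P_{\euc}^{(M/2)}$.
\begin{itemize}
\item The distance is symmetric: $\delta_{k,M}(M-q)=\delta_{k,M}(q)$, because \eqref{eq:delta-spectrum-a} is even in $\Delta_q$ and $\cos(m(2\pi-\Delta_q))=\cos(m\Delta_q)$.
\item The tangent alignment is symmetric: $\left|\sum_m m\sin(m(2\pi-\Delta_q))\right|=\left|\sum_m m\sin(m\Delta_q)\right|$, so by \eqref{eq:alpha-spectrum-a} $\cos\alpha_{k,M}$ agrees for $q$ and $M-q$.
\end{itemize}
Note that $\cos\alpha$ is measured at the transmitted endpoint. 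Using the formula at $\theta_0$ for both offsets keeps the statement consistent.

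\emph{Antipodal term.} For $q=M/2$, $\Delta_q=\pi$. Theorem~\ref{thm:antipodal-geometry} gives $\cos\alpha=0$ and $\delta=\delta_k$, so $P_{\euc}^{(M/2)}(k,M)=P_{\euc}^{\mathrm{anti}}(k)$ by \eqref{eq:euc-antipodal-final}. This also confirms directly that $\sum_m m\sin(m\pi)=0$.

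The only genuinely nontrivial point is the symmetry reduction: the orthogonal map must carry tangents to tangents, so that the law of $\bm{Y}$ rotates correctly. It does, because $\bm{R}$ commutes with $d/d\theta$.
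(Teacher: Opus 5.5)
Your proposal is correct and follows the paper's proof. Rotational symmetry reduces everything to offset classes, inclusion gives the lower bound, the union bound together with the pairing $P_{\euc}^{(q)}(k,M)=P_{\euc}^{(M-q)}(k,M)$ gives the upper bound, and $q=M/2$ is the antipodal term. You also make explicit the block-rotation map that carries tangents to tangents and check the $q\leftrightarrow M-q$ symmetry from \eqref{eq:delta-spectrum-a} and \eqref{eq:alpha-spectrum-a}, both of which the paper simply asserts.
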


\begin{IEEEproof}
By rotational symmetry of the uniform codebook, the pairwise probability from any symbol to the symbol at offset $q$ depends only on $q$. For each $q$, the event that the offset-$q$ symbol beats the transmitted symbol implies symbol error; hence
\[
P_{\euc}^{(q)}(k,M)\leqslant P_{s,\euc}(k,M),\qquad q=1,\ldots,M/2.
\]
Taking the maximum over $q$ gives the lower bound in~\eqref{eq:euc-union}. The union bound gives
\[
P_{s,\euc}(k,M)\leqslant \sum_{q=1}^{M-1}P_{\euc}^{(q)}(k,M).
\]
Since $P_{\euc}^{(q)}(k,M)=P_{\euc}^{(M-q)}(k,M)$, pairing offsets $q$ and $M-q$ yields the upper bound in~\eqref{eq:euc-union}. The offset $q=M/2$ is exactly the antipodal term.
\end{IEEEproof}

The Euclidean side is analytically explicit across all offset classes: Corollary~\ref{cor:euc-union} gives explicit lower and upper Euclidean SER bounds on every uniform even constellation. For the matched decoder, the antipodal event yields the lower bound $P_{\ml}^{\mathrm{anti}}(k)\leqslant P_{s,\ml}(k,M)$. In a uniform even codebook, this explicit matched formula covers only the antipodal offset class $q=M/2$, i.e., one offset class out of the $M-1$ pairwise comparisons seen by each symbol; a corresponding matched full-codebook expression for general uniform offsets remains unavailable.

Fig.~\ref{fig:phantom-pep} provides the numerical codebook-level comparison. Its left panel validates the antipodal pairwise formulas, while its right panel compares full-codebook matched and Euclidean-mismatched SER on the representative uniform even constellation $(k,M)=(20,12)$. The Euclidean union bound becomes looser at larger $\beta$ because pairwise error events overlap more strongly, but it still tracks the $\beta$-dependence of the Euclidean SER.

For fixed $k$ and small offset $\Delta_q$, Taylor expansion of~\eqref{eq:delta-spectrum-a} gives
\begin{equation}
\delta_{k,M}(q)=v_k\Delta_q-\frac{v_k(3k^2+3k-1)}{120}\,\Delta_q^3+O(\Delta_q^5).
\label{eq:local-spacing}
\end{equation}
Since $v_k=\sqrt{(2k^2+3k+1)/6}=k/\sqrt{3}+\sqrt{3}/4+O(k^{-1})$, for fixed $\sigma_c$ and fixed small offset class $q$ (especially $q=1$) one has $\Delta_q=2\pi q/M$ and hence $\delta_{k,M}(1)\approx (2\pi/\sqrt{3})(k/M)$ for moderate and large $k$. Thus keeping $M/k$ approximately constant stabilizes the nearest-neighbor spacing at first order, while the actual error performance still depends materially on both $\beta$ and $\sigma_c$.

\section{Conclusion}
When artificial noise is injected along the tangent space of a curved constellation, each hypothesis acquires a distinct rank-one covariance---a structural asymmetry absent on flat codebooks. This work has developed an exact pairwise analysis together with explicit Euclidean finite-codebook bounds for matched-versus-mismatched decoding on Fourier-curve constellations: exact Euclidean pairwise errors for arbitrary pairs, an exact Gaussian-expectation matched formula on tangent-orthogonal pairs, explicit antipodal geometry for every~$k$, and explicit Euclidean SER bounds for uniform even constellations.

Two structural findings emerge. First, increasing $k$ drives antipodal pairs toward an orthogonal-tangent mismatch regime ($|\gamma_k|=3/(2k+1)\to 0$, $\delta_k\to\sqrt{2}$) rather than enlarging the chord. Second, the matched metric differs from Euclidean nearest-neighbor decoding by only one tangent projection and one scalar quadratic correction per candidate, yet this low-cost correction yields a measurable SER improvement. These results directly support model-aware decoding analysis on curved constellations; secrecy-rate and adversarial channel extensions require additional modeling and are left for future work.


\begingroup\scriptsize
\begin{thebibliography}{99}

\bibitem{goel2008}
S.~Goel and R.~Negi, ``Guaranteeing secrecy using artificial noise,'' \emph{IEEE Trans. Wireless Commun.}, vol.~7, no.~6, pp.~2180--2189, 2008.

\bibitem{khisti2010}
A.~Khisti and G.~W. Wornell, ``Secure transmission with multiple antennas---II: The MIMOME wiretap channel,'' \emph{IEEE Trans. Inf. Theory}, vol.~56, no.~11, pp.~5515--5532, 2010.

\bibitem{mukherjee2014}
A.~Mukherjee, S.~A.~A. Fakoorian, J.~Huang, and A.~L. Swindlehurst, ``Principles of physical layer security in multiuser wireless networks: A survey,'' \emph{IEEE Commun. Surveys Tuts.}, vol.~16, no.~3, pp.~1550--1573, 2014.

\bibitem{daly2009}
M.~P. Daly and J.~T. Bernhard, ``Directional modulation technique for phased arrays,'' \emph{IEEE Trans. Antennas Propag.}, vol.~57, no.~9, pp.~2633--2640, 2009.

\bibitem{hochwald2000}
B.~M. Hochwald and T.~L. Marzetta, ``Unitary space-time modulation for multiple-antenna communications in Rayleigh flat fading,'' \emph{IEEE Trans. Inf. Theory}, vol.~46, no.~2, pp.~543--564, 2000.

\bibitem{love2003}
D.~J. Love, R.~W. Heath Jr., W.~Santipach, and M.~L. Honig, ``Grassmannian beamforming for multiple-input multiple-output wireless systems,'' \emph{IEEE Trans. Inf. Theory}, vol.~49, no.~10, pp.~2735--2747, 2003.

\bibitem{boecherer2015}
G.~B\"ocherer, F.~Steiner, and P.~Schulte, ``Bandwidth efficient and rate-matched low-density parity-check coded modulation,'' \emph{IEEE Trans. Commun.}, vol.~63, no.~12, pp.~4651--4665, 2015.

\bibitem{merhav1994}
N.~Merhav, G.~Kaplan, A.~Lapidoth, and S.~Shamai, ``On information rates for mismatched decoders,'' \emph{IEEE Trans. Inf. Theory}, vol.~40, no.~6, pp.~1953--1967, 1994.

\bibitem{scarlett2020}
J.~Scarlett, A.~Martinez, and A.~Guill\'en i F\`abregas, ``Mismatched decoding: Error exponents, second-order rates and saddlepoint approximations,'' \emph{Found. Trends Commun. Inf. Theory}, vol.~17, nos.~2--3, pp.~93--300, 2020.

\bibitem{lapidoth1996}
A.~Lapidoth, ``Nearest neighbor decoding for additive non-Gaussian noise channels,'' \emph{IEEE Trans. Inf. Theory}, vol.~42, no.~5, pp.~1520--1529, 1996.

\bibitem{poor1994}
H.~V. Poor, \emph{An Introduction to Signal Detection and Estimation}, 2nd~ed. New York, NY, USA: Springer, 1994.

\end{thebibliography}
\endgroup
\end{document}